\title{Logics for Reversible Regular Languages and Semigroups with Involution}
\author{Paul Gastin}{LSV, ENS Paris-Saclay \& CNRS, Universit{\'e} 
Paris-Saclay, France}{paul.gastin@lsv.fr}{}{}
\author{Amaldev Manuel}{Indian Institute of Technology Goa, India}{amal@iitgoa.ac.in}{}{}
\author{R.\ Govind}{Chennai Mathematical Institute, Chennai, India and LaBRI, University of Bordeaux, France}{govindr@cmi.ac.in}{}{}
\authorrunning{P.\ Gastin, M.\ Amaldev, R.\ Govind} 
\keywords{Regular languages, reversible languages, first-order logic,
automata, semigroups}
\renewcommand \( {\left(}
\renewcommand \) {\right)}
\newcommand \resp {\mathrm{resp.}}
\newcommand \Rev {\ensuremath{\mathsf{Rev}}\xspace}
\newcommand \mso {\ensuremath{\mathsf{MSO}}\xspace}
\newcommand \fo {\ensuremath{\mathsf{FO}}\xspace}
\newcommand \ltt {\ensuremath{\mathsf{LTT}}\xspace}
\newcommand \lrtt {\ensuremath{\mathsf{LRTT}}\xspace}
\newcommand \bet {\ensuremath \mathsf{bet}}
\newcommand \N {\ensuremath \mathsf{N}}
\newcommand{\approxr}[2]{\stackrel{r}{\approx}\mathrel{{}^{#1}_{#2}}}
\begin{document}
\maketitle
\begin{abstract}
  We present \mso and \fo logics with predicates `between' and `neighbour' that
  characterise various fragments of the class of regular languages that are
  closed under the reverse operation.  The standard connections that exist
  between \mso and \fo logics and varieties of finite semigroups extend to this
  setting with semigroups extended with an involution.  The case is different
  for \fo with neighbour relation where we show that one needs additional
  equations to characterise the class.
\end{abstract}
 
\section{Introduction}

In this paper we look closely at the class of regular languages that are closed
under the reverse operation.  We fix a finite alphabet $A$ for the rest of our
discussion.  The set $A^*$ (respectively $A^+$) denotes the set of all ($\resp$
non-empty) finite words over the alphabet $A$.  If $w = a_1 \cdots a_k$ with
$a_i\in A$ is a word then $w^r = a_k\cdots a_1$ denotes the reverse of $w$.
This notion is extended to sets of words pointwise, i.e.\ $L^r = \{ w^r \mid w
\in L\}$ and we can talk about reverse of languages.  A regular language $L
\subseteq A^*$ is {\em closed under reverse} or simply {\em reversible} if $L^r
= L$.  We let $\Rev$ denote the class of all reversible regular languages.
Clearly $\Rev$ is a strict subset of the class of all regular languages.
 
The class $\Rev$ is easily verified to be closed under union, intersection and
complementation.  It is also closed under homomorphic images, and inverse
homomorphic images under alphabetic (i.e.\ length preserving) morphisms.  However
they are not closed under quotients.  For instance, the language
$L=(abc)^*+(cba)^*$ is closed under reverse but the quotient $a^{-1}L =
bc(abc)^*$ is not closed under reverse.  Thus the class $\Rev$ fails to be a
{\em variety} of languages --- i.e.\ a class closed under Boolean operations,
inverse morphic images and quotients.  However reversible languages are closed
under bidirectional quotients, i.e.\ quotients of the form $u^{-1}L v^{-1} \cup
\(v^r\)^{-1}L \(u^r\)^{-1}$, given words $u,v$.  Thus, to a good extent, $\Rev$
shares properties similar to that of regular languages.  Hence it makes sense to
ask the question
 
\begin{quote}
  {\em ``are there good logical characterisations for the class $\Rev$ and its
  well behaved subclasses?''}.
\end{quote}

\subparagraph*{Our results.} We suggest a positive answer to the above question.  We
introduce two predicates {\em between} ($\bet(x,y,z)$ is true if position $y$ is
between positions $x$ and $z$) and {\em neighbour} ($\N(x,y)$ is true if
positions $x$ and $y$ are adjacent).  The predicates {\em between} and {\em
neighbour} are the natural analogues of the order relation $<$ and successor
relation $+1$ in the undirected case.  In fact this analogy extends to the case
of logical definability.  We show that $\Rev$ is the class of monadic second
order (\mso) definable languages using either of the predicates, i.e.\
$\mso(\bet)$ or $\mso(\N)$.  This is analogous to the classical
B\"uchi-Elgot-Trakhtenbrot theorem relating regular languages and \mso logic.
This connection extends to the case of first order logic as well.  We show that
$\fo(\bet)$ definable languages are precisely the reversible languages definable
in $\fo(<)$.  However the case of successor relation is different, i.e.\ the
class of $\fo(\N)$ definable languages is a strict subset of reversible
languages definable in $\fo(+1)$.  The precise characterisation of this class is
one of our main contributions.

The immediate question that arises from the above characterisations is one of
definability: {\em Given a reversible language is it definable in the logic?"}.
The case of $\fo(\bet)$ is decidable due to Sch\"utzenberger-McNaughton-Papert
theorem that states that syntactic monoids of $\fo(<)$ definable languages are
aperiodic (equivalent to the condition that the monoid contains no groups as
subsemigroups) \cite{Schutzenberger,McNaughton}.  However the question for
$\fo(\N)$ is open.  We prove a partial characterisation in terms of semigroups
with involution.  It is to be noted that the characterisation of $\fo(+1)$ is a
tedious one that goes via categories \cite{StraubingBook}.

\subparagraph*{Related work.} A different but related {\em between} predicate
(namely $a(x,y)$, for $a \in A$, is true if there is an $a$-labelled position
between positions $x$ and $y$) was introduced and studied in
\cite{Straubing1,Straubing2,Straubing3}.  Such a predicate is not definable in
$\fo^2(<)$, the two variable fragment of first-order logic (which corresponds to
the well known semigroup variety DA \cite{Tesson}).  The authors of
\cite{Straubing1,Straubing2,Straubing3} study the expressive power of $\fo^2(<)$
enriched with the between predicates $a(x,y)$ for $a\in A$, and show an
algebraic characterisation of the resulting family of languages.  The between
predicate (predicates rather) in \cite{Straubing1} is strictly less expressive
than the between predicate introduced in this paper.  However the logics
considered in \cite{Straubing1} have the between predicates in conjunction with
order predicates $<$ and $+1$.  Hence their results are orthogonal to ours.
 
Another line of work that has close parallels with the one in this paper is the
variety theory of involution semigroups (also called $\star$-semigroups) (see
\cite{Dolinka} for a survey).  Most investigations along these lines have been
on subvarieties of {\em regular} $\star$-semigroups (i.e.\ $\star$-semigroups
satisfying the equation $xx^\star x=x$).  As far as we are aware the equation
introduced in this paper has not been studied before.
 
\subparagraph*{Structure of the paper.} In Section \ref{Section:logic} we introduce
the predicates and present our logical characterisations.  This is followed by a
characterisation of $\fo(\N)$.  In Section \ref{Section:semigroup} we discuss
semigroups with involution, a natural notion of syntactic semigroups for
reversible languages.  In Section \ref{Section:conclusion} we conclude.

\section{Logics with {\em Between} and {\em Neighbour}}
\label{Section:logic}

As usual we represent a word $w = a_1 \cdots a_n$ as a structure containing
positions $\{1, \ldots, n\}$, and unary predicates $P_a$ for each letter $a$ in
the alphabet.  The predicate $P_a$ is precisely true at those positions labelled
by letter $a$.  The atomic predicate $x<y$ ($\resp$ $x+1=y$) is true if position
$y$ is after ($\resp$ immediately after) position $x$.  The logic \fo is the
logic containing atomic predicates, boolean combinations ($\phi \vee \psi$,
$\phi \wedge \psi$, $\neg \psi$ whenever $\phi, \psi$ are formulas of the
logic), and first order quantifications ($\exists x\,\psi$, $\forall x \,\psi$
if $\psi$ is a formula of the logic).  The logic \mso in addition contains
second order quantification as well ($\exists X\,\psi$, $\forall X \,\psi$ if
$\psi$ is a formula of the logic) --- i.e.\ quantification over sets of
positions.  By $\fo(\tau)$ or $\mso(\tau)$ we mean the corresponding logic with
atomic predicates $\tau$ in addition to the unary predicates $P_a$.  The
classical result relating \mso and regular languages states that $\mso(<) =
\mso(+1)$ defines all regular languages.  We introduce two analogous predicates
for the class $\Rev$ of reversible regular languages.
 
\subsection{$\mso(\bet), \mso(\N)$ and $\fo(\bet)$}

The ternary {\em between} predicate $\bet(x,y,z)$ is true for positions $x, y,
z$ when $y$ is in between $x$ and $z$, i.e.\
$$\bet(x,y,z) ~~:=~~ x < y < z \text{ or } z < y < x.$$

\begin{example} \label{Example:sigma1}
  The set of all words containing the subword $a_1a_2 \cdots a_k$ or $a_ka_{k-1}
  \cdots a_1$ is defined by the formula
  $$\exists x_1 \exists x_2 \cdots \exists x_k\ \bigwedge\limits_{i=1}^k  P_{a_i}(x_i)\ 
  \wedge \ \bigwedge\limits_{i=2}^{k-1}\ bet(x_{i-1},x_i,x_{i+1}).$$
\end{example}

The `successor' relation of $\bet$ is the binary predicate {\em neighbour}
$\N(x,y)$ that holds true when $x$ and $y$ are neighbours, i.e.
$$\N(x,y) ~~:=~~ x +1 =  y \text{ or } y+1  = x.$$

\begin{example}
The set of words of even length is defined by the formula

$$\exists X (X(e_1)  \wedge  \neg X(e_2)  
\wedge  \forall x\forall y (\N(x,y) \rightarrow (X(x) \leftrightarrow \neg X(y))))$$
where $e_1, e_2$ are the endpoints, i.e.\ the two positions with exactly one
neighbour (defined easily in $\fo(\N)$).
\end{example}

The relation $\N(x,y)$ can be defined in terms of $\bet$ using first-order
quantifiers as $x\neq y \wedge \forall z\,\neg\bet(x,z,y)$.  One can
also define $\bet(x,y,z)$ in terms of $\N$, but using second-order set
quantification.  To do this we assert that any subset $X$ of positions
\begin{itemize}
  \item that contains $x$, $z$ and at least some other position
  \item and such that any position in $X$, except for $x$ and $z$, has exactly two neighbours in $X$,
\end{itemize}
contains the position $y$.
\begin{proposition}
  For definable languages, $\mso(\bet) = \mso(\N)= \Rev$. 
\end{proposition}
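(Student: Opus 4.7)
The plan is to handle the two equalities in turn. The equality $\mso(\bet) = \mso(\N)$ follows immediately from the interdefinability already sketched in the excerpt: $\N$ is FO-definable from $\bet$, and $\bet$ is MSO-definable from $\N$, so the two logics define the same class of languages. It therefore suffices to prove $\mso(\bet) = \Rev$.

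For the inclusion $\mso(\bet) \subseteq \Rev$, I first observe that $\bet(x,y,z)$ is FO-definable from $<$ via $(x<y<z) \vee (z<y<x)$, so any $\mso(\bet)$ sentence translates to an $\mso(<)$ sentence and hence defines a regular language. For reversibility, the bijection $i \mapsto n+1-i$ is an isomorphism from the $\bet$-structure on $w$ to the $\bet$-structure on $w^r$, since being strictly between is symmetric in its two outer arguments. Consequently every $\mso(\bet)$ sentence has the same truth value on $w$ and on $w^r$, so its language is reversible.

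For the inclusion $\Rev \subseteq \mso(\bet)$, the plan is to invoke B\"uchi-Elgot-Trakhtenbrot and existentially guess an orientation of the word. Given a reversible $L$ defined by some $\mso(<)$ formula $\phi$, I introduce the endpoint predicate $\mathrm{endpt}(e) := \neg\exists y\exists z\,\bet(y,e,z)$, which isolates the two extreme positions of any non-empty word. For an endpoint $e$, set $x <_e y := (x = e \wedge y \neq e) \vee \bet(e,x,y)$; a short check shows that $<_e$ coincides with $<$ when $e$ is the leftmost position and with the converse of $<$ when $e$ is the rightmost. Letting $\phi_e$ denote $\phi$ with every occurrence of $<$ replaced by $<_e$, the candidate translation is $\phi' := \exists e\,(\mathrm{endpt}(e) \wedge \phi_e)$, with a trivial additional clause handling the empty word. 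If $w \in L$, choosing $e$ as the left endpoint makes $<_e$ coincide with $<$, so $w \models \phi'$. If $w \notin L$, the left-endpoint choice fails because $\phi$ already fails on $w$, while the right-endpoint choice evaluates $\phi$ with its order reversed, which by the isomorphism argument of the previous paragraph is equivalent to $w^r \models \phi$; this too fails because $w^r \notin L$ by reversibility.

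The only real work is isolating the correct definition of $<_e$ and verifying the two orientation cases; I do not anticipate any serious obstacle, and the argument is essentially the natural analogue of the standard translation between $\fo(<)$ and $\fo(\bet)$ lifted to the monadic second-order level.
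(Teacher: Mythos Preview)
Your proposal is correct and follows essentially the same approach as the paper: both use interdefinability for $\mso(\bet)=\mso(\N)$, and both prove $\Rev\subseteq\mso(\bet)$ by existentially guessing an endpoint $e$ and relativizing the $\mso(<)$ formula via the substitution $x<y \mapsto (x=e\neq y)\vee\bet(e,x,y)$, arguing that the two choices of endpoint correspond to evaluating $\varphi$ on $w$ and on $w^{r}$. Your argument is slightly more explicit in justifying the inclusion $\mso(\bet)\subseteq\Rev$ via the reversal isomorphism, and you phrase the endpoint predicate directly in terms of $\bet$ rather than $\N$, but these are cosmetic differences.
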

\begin{proof}
  Clearly from the discussion above, $\mso(\bet) = \mso(\N) \subseteq \Rev$.  To
  show the other inclusion, let $L$ be a reversible regular language and let
  $\varphi$ be a formula in $\mso(<)$ defining it.  Pick an endpoint $e$ of the
  given word, an endpoint is a position with exactly one neighbour, a property
  expressible in $\fo(\N)\subseteq\fo(\bet)$.  We relativize the formula
  $\varphi$ with respect to $e$ by replacing all occurrences of $x<y$ in the
  formula by $(e=x\neq y)\vee\bet(e,x,y)$.  Let $\varphi'(e)$ be the formula
  obtained in this way and let
  $\psi(e)=\neg\exists x,y\, (x\neq y \wedge \N(e,x) \wedge \N(e,y))$
  be the $\fo(\N)$ formula asserting that $e$ is an endpoint, then we claim that
  $$
  \chi = \exists e \(\psi\(e\) \wedge \varphi'\(e\)\)
  $$
  defines the language $L$. Let $w$ be a word of length $k\geq1$ then,
  \begin{align*}
    w \models \chi &~\Leftrightarrow~ w, 1 \models \varphi'(e) \text{ or } w, k \models \varphi'(e) \\
    &~\Leftrightarrow~ w \models \varphi \text{ or } w ^r \models \varphi \\
    &~\Leftrightarrow~ w \models \varphi ~~(\text{since $L$ is reversible})
  \end{align*}
  Hence $L(\chi)=L(\varphi)=L$. 
\end{proof}
The above proposition says that $\mso(\bet) = \mso(<) \cap \Rev$.  This carries
down to the first-order case using the same relativization idea.  In fact the
result holds for the prefix class $\Sigma_i$ (first-order formulas in prenex
normal form with $i$ blocks of alternating quantifiers starting with
$\exists$-block).
\begin{proposition}   \label{Prop:fobet}
  The following is true for definable languages.
  \begin{enumerate}
    \item  $\fo(\bet) = \fo(<) \cap  \Rev$. 
    \item $\Sigma_i(\bet)=\Sigma_i(<) \cap \Rev$.
  \end{enumerate}
\end{proposition}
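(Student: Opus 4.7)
The easy inclusion $\Sigma_i(\bet)\subseteq\Sigma_i(<)\cap\Rev$ (and in particular $\fo(\bet)\subseteq\fo(<)\cap\Rev$) is immediate: each atom $\bet(x,y,z)$ unfolds to the quantifier-free $(x<y\wedge y<z)\vee(z<y\wedge y<x)$, which preserves the prefix shape, and reversing a word is an isomorphism of $(w,\bet,P_a)$-structures, so every $\bet$-definable language is reversible. For part~(1) my plan is to reuse the relativization from the preceding MSO proof unchanged: starting from $\varphi\in\fo(<)$ defining reversible $L$, form $\chi=\exists e\,(\psi(e)\wedge\varphi'(e))$ where $\psi(e)$ asserts that $e$ is an endpoint and $\varphi'(e)$ rewrites each atom $x<y$ of $\varphi$ as $(e=x\neq y)\vee\bet(e,x,y)$. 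Correctness $L(\chi)=L$ follows exactly as in the MSO case, and every ingredient lives in $\fo(\bet)$.

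For part~(2) with $i\ge 2$ I would keep the same $\chi$ and carefully track its quantifier prefix in $\bet$. Taking the direct presentation $\psi(e)\equiv\forall u,v\,\neg\bet(u,e,v)$, which is $\Pi_1(\bet)$, the atom-level replacement keeps $\varphi'(e)$ in $\Sigma_i(\bet)$ with the same prefix as $\varphi$. Since the bound variables of $\psi(e)$ and $\varphi'(e)$ are disjoint and each formula is silent about the other's bound variables, I would absorb $\exists e$ into the leading $\exists$-block of $\varphi'(e)$ and push $\forall u,v$ into its adjacent $\forall\vec{x_2}$-block, which is exactly the step that needs $i\ge 2$. The result is a $\Sigma_i(\bet)$ formula equivalent to $\chi$.

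The genuine obstacle is the case $i=1$: there $\varphi'(e)$ is purely existential, the $\Pi_1$ constraint $\psi(e)$ has no neighbouring universal block to be pushed into, and the naive relativization lands in $\Sigma_2(\bet)$. I would bypass the endpoint trick altogether and work at the level of the matrix. Writing $\varphi=\exists\vec{x}\,\alpha(\vec{x})$ and letting $\alpha^r$ denote $\alpha$ with every atom $x<y$ swapped to $y<x$, reversibility of $L$ gives $L(\varphi^r)=L(\varphi)^r=L$, so $\varphi\equiv\exists\vec{x}\,(\alpha(\vec{x})\vee\alpha^r(\vec{x}))$. The matrix $\alpha\vee\alpha^r$ is now invariant under swapping $<$ and $>$, hence depends on $\vec{x}$ only through its $\bet$-type (labels together with the ternary $\bet$-relations among the free variables); enumerating the finitely many satisfying $\bet$-types as a disjunction of quantifier-free $\bet$-descriptions converts $\alpha\vee\alpha^r$ into a quantifier-free $\bet$-formula, yielding the required $\Sigma_1(\bet)$ formula for $L$.
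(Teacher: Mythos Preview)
Your proof is correct. For part~(1) and for part~(2) with $i\geq 2$ you follow the paper's argument verbatim: the same relativization $x<y\mapsto (e=x\neq y)\vee\bet(e,x,y)$, the same endpoint description $\psi(e)=\forall u,v\,\neg\bet(u,e,v)$, and the same absorption of $\exists e$ and $\forall u,v$ into the existing prefix blocks.

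Where you diverge is the $\Sigma_1$ case. The paper does not symmetrize the matrix; instead it invokes the classical normal form for $\Sigma_1(<)$---every such language is a finite union of shuffle ideals $A^*a_1A^*a_2A^*\cdots A^*a_kA^*$---notes that reversibility lets one pair each ideal with its reverse, and then cites Example~\ref{Example:sigma1} for a $\Sigma_1(\bet)$ formula defining each pair. Your route (replace $\alpha$ by $\alpha\vee\alpha^r$, then observe that a reversal-invariant quantifier-free $<$-formula depends only on the $\bet$-type of its free variables and enumerate the satisfying types) is a more intrinsic argument that avoids quoting the normal form; in effect you rederive exactly the piece of it that is needed, namely the finitely many order types on the free variables taken up to global reversal. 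The paper's version is shorter if one grants the normal form as known; yours is more self-contained and generalizes more readily. One small clarification: your ``$\bet$-type'' must also record equalities among the free variables (two distinct positions carry no nontrivial $\bet$-relation by themselves), but since equality is always in the language this is a cosmetic point.
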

\begin{proof}
  Given an $\fo(<)$ formula in prenex form defining a language in $\Rev$, we
  replace every occurrence of $x<y$ by $(e=x\neq y)\vee\bet(e,x,y)$ as before,
  where $e$ is asserted to be an endpoint with $\psi(e)=\forall
  x,y\,\neg\bet(x,e,y)$.  For every formula in $\Sigma_i(<)$, $i\geq 2$ this
  results in an equivalent formula in $\Sigma_i(\bet)$.  For the case of
  $\Sigma_1$, let us note that every formula in $\Sigma_1(<)$ defines a union of
  languages of the form $A^*a_1A^*a_2 A^*\cdots A^*a_kA^*$.  Such a language can
  be written as a disjunction of formulas like the one in
  Example~\ref{Example:sigma1}.  
\end{proof}

 \subsection{$\fo(\N)$}
Next we address the expressive power of \fo with the neighbour predicate. 

We start by detailing the class of {\em locally threshold testable languages}.
Recall that word $y$ is a factor of word $u$ if $u = x y z$ for some $x,z$ in
$A^*$.  We use $\sharp(u, y)$ to denote the number of times the factor $y$
appears in $u$.

Let $\approx_k^t$, for $k, t>0$, be the equivalence on $A^*$, whereby two words
$u$ and $v$ are equivalent if either they both have length at most $k-1$ and
$u=v$, or otherwise they have
\begin{enumerate}
\item the same prefix of length $k-1$,
\item the same suffix of length $k-1$,
\item and the same number of occurrences, upto threshold $t$, for all factors of length $ \leq k$, i.e.\ 
for each word $y \in A^*$ of length at most  $k$, either 
$\sharp(u, y)  = \sharp(v, y) < t$, or
 $\sharp(u, y) \ge t$ and $\sharp(v,y)\ge t$.
\end{enumerate}

\begin{example}
  We have $ababab \approx_2^1 abab \not\approx_2^1 abbab$.  Indeed, all the
  words start and end with the same letter.  In the first two words the factors
  $ab$ as well as $ba$ appear at least once.  While in the last word the factor
  $bb$ appears once while it is not present in the word $abab$.
  Notice also that $ababab \not\approx_2^2 abab$ due to the factor $ba$.
\end{example}

A language is \emph{locally threshold testable} (or \ltt for short) if it is a
union of $\approx_{k}^{t}$ classes, for some $k,t > 0$.
\begin{example}
  The language $(ab)^*$ is \ltt. In fact it is {\em locally testable} (the
  special case of locally threshold testable with $t=1$).  Indeed, $(ab)^{*}$ is
  the union of three classes: $\{\varepsilon\}$, $\{ab\}$ and $abab(ab)^{*}$
  which is precisely the set of words that begin with $a$, end with $b$, and the
  only factors are $ab$ and $ba$.
  
  A language that is definable in $\fo(<)$ and not \ltt is $c^*ac^*bc^*$.  In
  this language if $a$ and $b$ are sufficiently separated by $c$-blocks then
  the order between $a$ and $b$ cannot be differentiated.  It can be proved that
  for any $t,k$ there is a sufficiently large $n$ such that $c^n a c^n b c^n
  \approx_k^t c^n bc^nac^n$.
\end{example}

Locally threshold testable languages are precisely the class of languages
definable in $\fo(+1)$ \cite{BeauquierPin,WolfgangThomas}.  Since we can define the
neighbour predicate $\N$ using $+1$, clearly $\fo(\N) \subseteq \fo(+1) \cap
\Rev = \ltt \cap \Rev$.  But this inclusion is strict as shown in
Example \ref{fo(n)-containment}.

\begin{example}
	Consider the language $L=ua^* + a^*u^r$ of words which have either $u$ as
	prefix and followed by an arbitrary number of $a$'s, or $u^r$ as suffix and preceded
	by an arbitrary number of $a$'s.  The language $L$ is in $\fo(\N)$. When 
  $u=a_1\cdots a_n$, it can be defined by a formula of the form $\exists 
  x_1,\ldots,x_n\,\psi$ where $\psi$ states that $x_1$ is an endpoint, 
  $\bigwedge_{1\leq i<n}\N(x_i,x_{i+1})$, $\bigwedge_{1<i<n}x_{i-1}\neq 
  x_{i+1}$, $\bigwedge_{1\leq i\leq n} P_{a_i}(x_i)$, and all other positions 
  are labelled $a$.
\end{example}

\begin{example}\label{fo(n)-containment}
  Consider the language $L$ over the alphabet $\{a,b,c\}$,
  $$L = \{ w \mid \sharp(w, ab) = 2 , \sharp(w, ba) = 1 \text{ or }  \sharp(w, ab) = 1, \sharp(w, ba) = 2 \}.$$
  Since $L$ is locally threshold testable and reverse closed, $L \in \fo(+1) \cap \Rev$.

We can show that $L \not\in \fo(\N)$ by showing that the words, 
\begin{center}
$c^k\ ab\ c^k\ ba\ c^kab\ c^k \in L$ \hfil $c^k\ ab\ c^k\ ab\ c^k\ ab\ c^k \not \in L$
\end{center}
for $k>0$ are indistinguishable by an $\fo(\N)$ formula of quantifier depth $k$.
For showing the latter claim, one uses Ehrenfeucht-Fraissé games and argues that
in the $k$-round EF-game the duplicator has a winning strategy.  The strategy is
roughly described below:
\begin{center}
$\underline{c^kabc^kb}\,{ac^kabc^k}$ \hfil ${c^kabc^ka}\,\underline{bc^kabc^k}$
\end{center}
Any move of the spoiler is mimicked by the duplicator in the corresponding
underlined or non-underlined part of the other word, while maintaining the
neighbourhood relation between positions.  For instance, if the spoiler plays
the first $b$ on the underlined part of the first word, then the duplicator
chooses the last $b$ on the underlined portion of the word on the right.
Similarly, if the spoiler plays the first $a$ on the non-underlined part of the
first word, the duplicator chooses the last $a$ on the non-underlined portion of
the word on the right.  Note that, since no order on positions in the words can
be checked with the neighbour predicate, there is no way to distinguish between
these words, if the duplicator plays in the above way ensuring that the position
played has the same neighbourhood relation as the position played by the
spoiler.  Therefore, the Neighbour predicate will not be able to distinguish
between $ab$ and $ba$ when they are sufficiently separated by $c$'s.

\end{example}
From the above example, we get,
\begin{proposition}
\label{Prop:fon}
For definable languages, $\fo(\N) \subsetneq \fo(+1) \,\cap\, \Rev = \ltt \,\cap\, \Rev$.
\end{proposition}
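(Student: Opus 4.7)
The plan is to combine three ingredients already prepared in the surrounding discussion. First, I would establish the inclusion $\fo(\N) \subseteq \fo(+1) \cap \Rev$. The containment in $\fo(+1)$ is immediate, since $\N(x,y)$ is literally the formula $(x+1=y)\vee(y+1=x)$. For closure under reverse I would give a semantic argument: for $w=a_1\cdots a_n$, the bijection $i\mapsto n+1-i$ is an isomorphism between the $\N$-structures of $w$ and $w^r$ (as $\N$ is symmetric with respect to direction, and the labelling predicates $P_a$ are unary). A routine induction on the structure of $\fo(\N)$-formulas then yields $w\models\varphi \iff w^r\models\varphi$ for every $\varphi$, so every $\fo(\N)$-definable language lies in $\Rev$.

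Second, I would invoke the Beauquier--Pin/Thomas theorem \cite{BeauquierPin,WolfgangThomas} to rewrite $\fo(+1)=\ltt$, giving the equality $\fo(+1)\cap\Rev=\ltt\cap\Rev$.

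Third, the strict inclusion is witnessed by the language $L$ of Example~\ref{fo(n)-containment}. Membership in $L$ depends only on the multiplicities of the factors $ab$ and $ba$ up to threshold $3$, so $L\in\ltt$; and $L^r=L$ is clear from the symmetry of its definition, so $L\in\Rev$. That $L\notin\fo(\N)$ is precisely the content of the Ehrenfeucht--Fraïssé argument sketched in Example~\ref{fo(n)-containment}: for every $k$ the duplicator has a winning strategy in the $k$-round EF-game between the words $u_k=c^k\,ab\,c^k\,ba\,c^k\,ab\,c^k\in L$ and $v_k=c^k\,ab\,c^k\,ab\,c^k\,ab\,c^k\notin L$.

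The main obstacle is formalising this last step. The strategy is to partition each word into the underlined and non-underlined halves as indicated in Example~\ref{fo(n)-containment} and, whenever the spoiler plays in one half of one word, to have the duplicator answer in the correspondingly labelled half of the other word, reflecting positions across the central $c$-block of that half. One maintains the invariant that, after $i$ rounds, played pairs have isomorphic $\N$-neighbourhoods of radius $2^{k-i}$; since $\N$ carries no orientation information, reflecting preserves both the labels and the neighbour relation among already-played positions. This invariant guarantees victory through $k$ rounds, contradicting the existence of a distinguishing $\fo(\N)$-formula of quantifier depth $k$ and completing the separation.
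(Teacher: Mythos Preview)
Your proposal is correct and follows essentially the same approach as the paper: the inclusion $\fo(\N)\subseteq\fo(+1)\cap\Rev$ and the equality $\fo(+1)=\ltt$ are exactly the observations made in the paragraph preceding Example~\ref{fo(n)-containment}, and the strict separation is precisely the content of that example; the paper's ``proof'' of the proposition is literally just the sentence ``From the above example, we get''.

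One small caution: your description of the duplicator's move as ``reflecting positions across the central $c$-block of that half'' does not work as a fixed bijection --- for instance, any single reflection or shift between the underlined half $c^{k}abc^{k}b$ of $u_k$ and the underlined half $bc^{k}abc^{k}$ of $v_k$ fails to preserve labels at the $a,b$ positions. The paper's strategy description is equally informal, and your stated invariant (isomorphic $\N$-neighbourhoods of radius $2^{k-i}$ after round $i$) is the right rigorous backbone; just be aware that maintaining it requires the usual adaptive Hanf-locality strategy rather than a single predetermined reflection.
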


Next we will characterise the class of languages accepted by $\fo(\N)$. 
For $t>0$ we define the equality with threshold $t$ on the set $\mathbb{N}$ of 
natural numbers by $i=^{t}j$ if $i=j$ or $i,j\geq t$. Recall that $\sharp(w,v)$ 
denotes the number of occurrences of $v$ in $w$, i.e.\ the number of pairs 
$(x,y)$ such that $w=xvy$. We extend this to $\sharp^{r}(w,v)$ which counts the 
number of occurrences of $v$ or $v^{r}$ in $w$, i.e.\ the number of pairs 
$(x,y)$ such that $w=xvy$ or $w=xv^{r}y$. Notice that 
$\sharp^{r}(w,v)=\sharp^{r}(w,v^{r})=\sharp^{r}(w^{r},v)=\sharp^{r}(w^{r},v^{r})$.

We define now the \emph{locally-reversible threshold testable} (\lrtt) equivalence relation.  
Let $k,t>0$.  Two words $w,w'\in A^{*}$ are $(k,t)$-\lrtt equivalent, denoted
$w\approxr{t}{k}w'$ if $|w|<k$ and $w'\in\{w,w^{r}\}$, or
\begin{itemize}
  \item $w,w'$ are both of length at least $k$, and
  
  \item $\sharp^{r}(w,v) =^{t} \sharp^{r}(w',v)$ for all $v\in A^{\leq k}$, and

  \item if $x,x'$ are the prefixes of $w,w'$ of length $k-1$ and $y,y'$ are the
  suffixes of $w,w'$ of length $k-1$ then $\{x,y^{r}\}=\{x',y'^{r}\}$.
\end{itemize}
Notice that $w\approxr{t}{k}w^{r}$ for all $w\in A^{*}$ and $w\approx^{t}_k w'$
implies $w\approxr t k w'$ for all $w,w'\in A^{*}$.  Notice also that $\approxr
t k$ is not a congruence.  Indeed, we have $ab \approxr t k ba$ but $aba
\not\approxr t k baa$.  On the other hand, if $v\approxr t k w$ then for all
$u\in A^{*}$ we have $uv \approxr t k uw$ or $uv \approxr t k uw^{r}$, and
similarly $vu \approxr t k wu$ or $vu \approxr t k w^{r}u$.

\begin{definition}[Locally-Reversible Threshold Testable Languages]
  A language $L$ is {\em locally-reversible threshold testable}, \lrtt for
  short, if it is a union of equivalence classes of $\approxr{t}{k}$ for some
  $k,t>0$.
\end{definition}

\begin{theorem}
  Languages defined by $\fo(\N)$ are precisely the class of locally-reversible
  threshold testable languages.
\label{Thm:fon}
\end{theorem}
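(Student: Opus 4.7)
The plan is to prove the two inclusions separately, in the spirit of the classical Thomas/Beauquier-Pin theorem $\fo(+1)=\ltt$, with the reversibility captured by using $\sharp^{r}$ and the unordered pair $\{x,y^{r}\}$ in place of their ordered analogues.

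For the inclusion \lrtt $\subseteq\fo(\N)$, I would construct explicit formulas. The endpoints (positions with at most one neighbour) are definable in $\fo(\N)$, and from an endpoint $e$ the unique non-backtracking path of length $k-2$ is $\fo(\N)$-definable, so the word of length $k-1$ spelled along it is an $\fo(\N)$-definable property. Taking a disjunction over the two endpoints then expresses precisely the condition on the unordered pair $\{x,y^{r}\}$. For the threshold count I would use, for each $v=a_{1}\cdots a_{n}\in A^{\leq k}$, a formula asserting the existence of $c$ pairwise distinct tuples $(x_{1},\ldots,x_{n})$ of pairwise distinct positions with $\N(x_{i},x_{i+1})$ and $P_{a_{i}}(x_{i})$. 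In a linear word a non-backtracking walk must be monotone, so such a tuple is either a left-to-right occurrence of $v$ or a left-to-right occurrence of $v^{r}$; hence the number of tuples equals $\sharp^{r}(w,v)$ up to a factor of two (the factor of two appearing exactly when $v=v^{r}$), and a threshold $2t$ on tuples recovers $\sharp^{r}(w,v)=^{t}\sharp^{r}(w',v)$.

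For the converse $\fo(\N)\subseteq\lrtt$, I would argue via Ehrenfeucht-Fra\"iss\'e games. The goal is: for every quantifier rank $m$ there exist $k,t$ (with $k$ roughly $2^{m}$ and $t$ roughly $m$, as in the $\ltt$ proof) such that $w\approxr{t}{k}w'$ forces the duplicator to win the $m$-round game on the signature $(\N,(P_{a})_{a\in A})$. Her invariant after round $i$ would be that every matched pair of positions has isomorphic radius-$2^{m-i}$ neighbourhoods, where the isomorphism may be either orientation-preserving or orientation-reversing, the choice being made consistently on each component of already-played positions. Far-apart regions are handled by the factor-count condition, now read via $\sharp^{r}$ because the duplicator has a genuine choice of orientation; the end-regions adjacent to an endpoint are handled by the prefix/suffix condition, now read as an unordered pair because the two endpoints cannot be distinguished using $\N$ alone.

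The main obstacle is maintaining this orientation invariant across rounds. In the $\fo(+1)$ case each matched position fixes an orientation locally; in $\fo(\N)$, the duplicator retains a reflection freedom until two positions in the same segment are played, at which point the relative orientation gets forced. Verifying that the $\approxr{t}{k}$ hypothesis still supplies a valid response at that moment, and that the parameters $k$ and $t$ decrement appropriately with each round, is the heart of the argument; the key algebraic identity is $\sharp^{r}(w,v)=\sharp^{r}(w,v^{r})$, which ensures that whichever orientation the duplicator is eventually forced into, the factor counts she needs are available on the other side. Once this bookkeeping is in place the induction closes and both inclusions follow.
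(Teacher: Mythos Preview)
Your proof of the inclusion $\lrtt\subseteq\fo(\N)$ is essentially the paper's: both write explicit formulas for the three ingredients (short words, threshold counts of factors up to reversal, and the unordered endpoint pair). The paper's counting formula is marginally cleaner---it declares two witnessing tuples distinct when their \emph{unordered} endpoint pairs differ, which makes the count equal $\sharp^{r}(w,v)$ exactly and sidesteps your factor-of-two case split for palindromic $v$---but your version works as well once you use the right threshold in each case.

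For the converse $\fo(\N)\subseteq\lrtt$ the paper takes a much shorter route: it simply invokes Hanf's locality theorem for structures of bounded degree, applied to words viewed as undirected labelled paths. In that signature the $r$-ball around an interior position is a factor of length $2r+1$ with a marked centre, and two such balls are isomorphic precisely when the factors agree up to reversal; counting ball types up to a threshold is therefore exactly the $\sharp^{r}$ condition, and the boundary balls give the unordered pair $\{x,y^{r}\}$. Your Ehrenfeucht--Fra\"iss\'e argument with the per-component orientation invariant is correct and is, in effect, a hand-rolled proof of Hanf's theorem specialised to this setting. What you gain is a self-contained argument that makes the reflection freedom of the duplicator explicit; what the paper gains is a one-line proof. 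A small caveat on parameters: the standard Hanf bookkeeping needs the neighbourhood radius to shrink by a factor of roughly three per round, so $k$ of order $3^{m}$ rather than $2^{m}$; this does not affect the shape of your argument, only the constants.
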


\begin{proof}
  ($\Leftarrow$) Assume we are given an \lrtt language, i.e.\ a union of 
  $\approxr t k$-classes for some $k,t>0$.  We explain how to write an $\fo(\N)$ formula for
  each $\approxr t k$-class. 
  Consider a word $v=a_1a_2\cdots a_n \in A^{+}$.  For $m\in\mathbb{N}$, we can 
  say that $v$ or its reverse occurs at least $m$ times in a word $w\in A^{*}$,
  i.e.\ $\sharp^{r}(w,v)\geq m$, by the formula 
  \begin{equation*}
    \begin{split}
      \varphi_v^{\geq m} ={} &\exists x_{1,1} \cdots \exists x_{1,n} \cdots \exists x_{m,1} \cdots \exists x_{m,n} \\
      & \bigwedge_{i=1}^{m} \Big(\bigwedge_{j=1}^{n-1} \N(x_{i,j},x_{i,j+1}) 
      \wedge \bigwedge_{j=2}^{n-1}\ (x_{i,j-1} \neq x_{i,j+1})
      \wedge \bigwedge_{j=1}^{n} P_{a_j}(x_{i,j}) \Big) \\
      & {}\wedge \bigwedge_{1 \leq i < j \leq m} 
      \neg((x_{i,1} = x_{j,1} \wedge x_{i,n} = x_{j,n}) \vee 
      (x_{i,1} = x_{j,n} \wedge x_{i,n} = x_{j,1})) \,. 
    \end{split}
  \end{equation*}
  Similarly, we can write a formula $\psi_v\in\fo(\N)$ that says that a word 
  belongs to $\{v,v^{r}\}$. Finally, given two words of same length $u,v\in 
  A^{n}$, we can write a formula $\chi_{u,v}\in\fo(\N)$ that says that $u,v$
  occur at two different end points of a word $w$, i.e.\ that 
  $\{x,y^{r}\}=\{u,v\}$ where $x,y$ are the prefix and suffix of $w$ of length 
  $n$.

  ($\Rightarrow$) Hanf's theorem \cite{DBLP:books/daglib/0082516} states that
  two structures $A$ and $B$ are $m$-equivalent (i.e.\ indistinguishable by any
  \fo formula of quantifier rank at most $m$), for some $m \in \mathbb{N}$ if
  for each $3^m$ ball type $S$, both $A$ and $B$ have the same number of $3^m$
  balls of type $S$ upto a threshold $m\times e$, where $e \in \mathbb{N}$.
  Applying Hanf's theorem to undirected path graphs, we obtain that given an
  $\fo(\N)$ formula $\Phi$, there exist $k,t>0$ such that the fact that a word
  $w$ satisfies $\Phi$ only depends on its $\approxr t k$-class.  The set of all
  such words is therefore an \lrtt language.
\end{proof}

\section{Semigroups with Involution}
\label{Section:semigroup}
In this section we address the question of definability of a language --- ``is
the given reversible regular language definable by a formula in the logic?"  ---
in the previously defined logics.  We show that in the case of $\fo(\bet)$ the
existing theorems provide an algorithm for the problem, while for $\fo(\N)$ the
answer is not yet known.

First we recall the notion of recognisability by a finite semigroup. A finite
semigroup $(S,\cdot)$ is a finite set $S$ with an associative binary operation
${\cdot}\colon S \times S \rightarrow S$. If the semigroup operation has an identity,
then it is necessarily unique and is denoted by $1$.  In this case $S$ is called
a monoid.  A semigroup morphism from $(S, \cdot)$ to $(T, +)$ is a map $h\colon S
\rightarrow T$ that preserves the semigroup operation, i.e.\ $h(a \cdot b)= h(a)
+ h(b)$ for $a,b$ in $S$.  Further if $S$ and $T$ are monoids the map is a
monoid morphism if $h$ maps the identity of $S$ to the identity of $T$.

The set $A^*$ ($\resp$ $A^+$) under concatenation forms a free monoid ($\resp$
free semigroup).  A language $L \subseteq A^*$ is {\em recognised} by a
semigroup (or monoid) $(S, \cdot)$, if there is a morphism $h\colon A^*
\rightarrow (S, \cdot)$ and a set $P \subseteq S$, such that $L = h^{-1}(P)$.

Given a language $L$, the {\em syntactic congruence} of $L$, denoted as $\sim_L$
is the congruence on $A^*$,
$$
x \sim_L y ~~\text{ if }~~ uxv \in L \Leftrightarrow uyv \in L \text{ for all $u,v \in A^*$}.
$$
The quotient $A^*\!/\!\sim_L$, ($\resp$ $A^+\!/\!\sim_L$) denoted as $M(L)$, is
called the {\em syntactic monoid} ($\resp$ {\em syntactic semigroup}).  It
recognises $L$ and is the unique minimal object with this property: any monoid
$S$ recognising $L$ has a surjective morphism from a submonoid of $S$ to $M(L)$
\cite{StraubingBook}.

In the particular case of reversible languages the syntactic monoid 
described above admits further properties.  The observation is that the reverse
operation can be extended to congruence classes of the syntactic congruence by
letting $[x]^r = [x^r]$ for each word $x$ and it is well defined since if $x
\sim_L y $ then $x^r \sim_L y^r$ as can be easily verified.  Moreover this operation is
an involution, i.e.\ $\([x]^r\)^r = \([x^r]\)^r = \left [\(x^r\)^r\right ] =
[x]$, and an anti-isomorphism on the congruence classes, i.e.\ $\([x]\cdot
[y]\)^r = \([x \cdot y]\)^r= [ (x \cdot y)^r ] = \left[ y^r \cdot x^r\right ] =
\left [ y^r \right ] \cdot \left [ x^r \right ] = \left [ y \right ]^r \cdot
\left [ x \right ]^r$.  Therefore one can enrich the notion of semigroups for
recognisability in the case of reversible languages as below.

A {\em semigroup with involution} (also called a $\star$-semigroup) $(S, \cdot,
\star)$ is a semigroup $(S, \cdot)$ extended with an operation ${\star} \colon S
\rightarrow S$ (called the involution) such that
\begin{enumerate}
  \item the operation $\star$ is an involution on $S$, i.e.\ $\(a^\star\)^\star
  = a$ for all elements $a$ of $S$, 

  \item the operation $\star$ is an anti-automorphism on $S$ (isomorphism
  between $S$ and opposite of $S$), i.e.\ $\(a \cdot b\)^\star = b^\star \cdot
  a^\star$ for any $a,b$ in $S$.
\end{enumerate}
It is a $\star$-monoid if $S$ is a monoid.  It is easy to see that in the case
of $\star$-monoids, necessarily $1^\star =1$.  Clearly the free monoid $A^*$
with the reverse operation $r$ as the involution is a $\star$-monoid, since
$(w^r)^r = w$ and $(v \cdot w)^r = w^r \cdot v^r$.  When there is no
ambiguity, we just write $A^*$ to refer to the $\star$-monoid $(A^*, \cdot, r)$.

A map $h\colon S \rightarrow T$ between two $\star$-semigroups $(S, \cdot,
\star)$ and $(T, +, \dagger)$ is a morphism if it is a morphism between the
semigroups $(S,\cdot)$ and $(T,+)$ that preserves the involution, i.e.\
$h(a^\star) = h(a)^{\dagger}$.

A language $L \subseteq A^*$ is said to be recognised by a $\star$-semigroup
$(S, \cdot ,\star)$, if there is a morphism $h\colon (A^*,\cdot, r) \rightarrow
(S,\cdot,\star)$ and a set $P \subseteq S$, such that $P^\star = P$ and $L =
h^{-1}(P)$.  The following proposition summarises the discussion so far.

\begin{proposition}
The following are equivalent for a language $L$.
\begin{enumerate}
\item $L$ is a reversible regular language,
\item $L$ is recognised by a finite $\star$-monoid,
\item $M(L)$ with the reverse operation is a finite $\star$-monoid with 
$P=P^{\star}$ where $P=\{[u]\mid u\in L\}$,
i.e.\ $(M(L),\cdot,r)$ recognises $L$ as a $\star$-monoid.
\end{enumerate}
\end{proposition}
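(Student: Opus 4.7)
The plan is to establish the implications $3\Rightarrow 2\Rightarrow 1\Rightarrow 3$, each of which is essentially a routine verification; the only nontrivial observation is that reversibility of $L$ is exactly what makes the reverse operation well-defined on syntactic classes.

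The implication $3\Rightarrow 2$ is immediate, since (3) exhibits a concrete finite $\star$-monoid recognising $L$ (namely $M(L)$), together with the required subset $P$ satisfying $P^{\star}=P$.

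For $2\Rightarrow 1$, suppose $h\colon A^{*}\to S$ is a $\star$-morphism into a finite $\star$-monoid and $L=h^{-1}(P)$ with $P^{\star}=P$. Finiteness of $S$ shows that $L$ is regular. For reversibility, note that because $h$ preserves the involution one has $h(w^{r})=h(w)^{\star}$ for every $w\in A^{*}$ (a short induction on the length of $w$, using $(a\cdot b)^{\star}=b^{\star}\cdot a^{\star}$). Hence $w\in L\iff h(w)\in P\iff h(w)^{\star}\in P^{\star}=P\iff h(w^{r})\in P\iff w^{r}\in L$, so $L^{r}=L$.

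For the main direction $1\Rightarrow 3$, I would proceed in four small steps. First, show that the assignment $[x]\mapsto[x^{r}]$ is a well-defined map on $M(L)$: if $x\sim_{L}y$ then for all $u,v$, using $L=L^{r}$ twice, one has $ux^{r}v\in L\iff v^{r}xu^{r}\in L\iff v^{r}yu^{r}\in L\iff uy^{r}v\in L$, so $x^{r}\sim_{L}y^{r}$. Second, observe that this operation is an involution, since $[x]^{rr}=[x^{rr}]=[x]$. Third, verify the anti-morphism identity $([x]\cdot[y])^{r}=[(xy)^{r}]=[y^{r}x^{r}]=[y]^{r}\cdot[x]^{r}$; together with the fact that $M(L)$ is a finite monoid (because $L$ is regular), this makes $(M(L),\cdot,r)$ a finite $\star$-monoid. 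Finally, for $P=\{[u]\mid u\in L\}$, the identity $P^{\star}=\{[u^{r}]\mid u\in L\}=\{[v]\mid v\in L^{r}\}=P$ follows directly from $L^{r}=L$, and $L=h^{-1}(P)$ holds for the canonical morphism by definition of $\sim_{L}$.

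The only step that uses a nontrivial hypothesis is the first one in the argument for $1\Rightarrow 3$, where reversibility of $L$ is precisely what is needed to transport the syntactic equivalence under the reverse operation. All remaining verifications are purely syntactic, so I do not anticipate any real obstacle beyond organising them cleanly.
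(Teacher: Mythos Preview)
Your proof is correct and follows essentially the same approach as the paper. The paper does not give a separate proof of this proposition at all; it states that the proposition ``summarises the discussion so far,'' and that discussion is precisely your $1\Rightarrow 3$ argument (well-definedness of $[x]\mapsto[x^{r}]$, involution, anti-automorphism), while the implications $3\Rightarrow 2$ and $2\Rightarrow 1$ are left implicit. One minor remark: in $2\Rightarrow 1$ the identity $h(w^{r})=h(w)^{\star}$ is already the definition of a $\star$-morphism applied to $w\in A^{*}$, so the induction you mention is not needed.
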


A semigroup (or monoid) is aperiodic if there is some $n \in \mathbb{N}$ such
that $a^n=a^{n+1}$ for each element $a$ of the semigroup.
Sch\"utzenberger-McNaughton-Papert theorem states that a language $L$ is
definable in $\fo(<)$ if and only if the syntactic monoid is aperiodic.  This
theorem in conjunction with Proposition \ref{Prop:fobet} gives that,
\begin{proposition}
  A reversible language $L$ is definable in $\fo(\bet)$ if and only if 
  $M(L)$ is aperiodic.
\end{proposition}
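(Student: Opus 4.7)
The plan is to derive the statement as a direct corollary by combining the two results already on the table: the characterisation $\fo(\bet) = \fo(<) \cap \Rev$ from Proposition~\ref{Prop:fobet}, and the Schützenberger-McNaughton-Papert theorem characterising $\fo(<)$-definable languages as exactly those whose syntactic monoid is aperiodic. The hypothesis of the proposition already assumes that $L$ is reversible, so the syntactic monoid carries the involution structure discussed in this section and no extra care is needed on the algebraic side.

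For the forward direction, I would start by assuming $L$ is definable in $\fo(\bet)$. By Proposition~\ref{Prop:fobet}(1), any $\fo(\bet)$-definable language is in particular $\fo(<)$-definable. By the Schützenberger-McNaughton-Papert theorem, this is equivalent to $M(L)$ being aperiodic, giving the conclusion immediately.

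For the converse, assume $L$ is reversible and $M(L)$ is aperiodic. Schützenberger-McNaughton-Papert yields that $L$ is definable in $\fo(<)$. Since $L$ is reversible by hypothesis, $L \in \fo(<) \cap \Rev$, and Proposition~\ref{Prop:fobet}(1) then places $L$ in $\fo(\bet)$, as desired.

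There is no real obstacle here: the only thing to verify is that the Schützenberger-McNaughton-Papert theorem applies to $M(L)$ as a plain monoid, regardless of whether we remember its involution. This is fine because aperiodicity is a condition purely about the multiplicative structure, and $M(L)$ as a monoid coincides with the syntactic monoid used in the classical statement. Hence the result follows purely by stringing together the two cited characterisations.
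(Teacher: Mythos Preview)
Your argument is correct and is exactly the derivation the paper gives: it states the proposition as an immediate consequence of Proposition~\ref{Prop:fobet} together with the Sch\"utzenberger--McNaughton--Papert theorem, which is precisely the two-step combination you spell out. Your extra remark that aperiodicity depends only on the multiplicative structure (not the involution) is a useful clarification but not needed for the argument.
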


The above theorem hence yields an algorithm for definability of a language in
$\fo(\bet)$, i.e.\ check if the language is reversible, if so compute the
syntactic monoid (which is also a monoid with an involution) and test for
aperiodicity.

Next we look at the logic $\fo(\N)$.  The characterisation theorem for $\fo(+1)$
due to Brzozowski and Simon \cite{BrozozoskiSimon}, and Beauquier and Pin
\cite{BeauquierPin}, is stated below.  Recall that an element of a semigroup $e$
is an {\em idempotent} if $e\cdot e = e$.

\begin{theorem}[Brzozowski-Simon, Beauquier-Pin] 
The following are equivalent.
\begin{enumerate}
  \item $L$ is locally threshold testable.

  \item $L $ is definable in $\fo(+1)$.

  \item The syntactic semigroup 
  of $L$ is finite, aperiodic and satisfies the identity
  $e\,{x}\,f\,y\,e\,{z}\,f\ =\
  e\,{z}\,f\,y\,e\,{x}\,f$
  for all $e,f,x, y, z \in M(L)$ with $e,f$ idempotents.
\end{enumerate}
\end{theorem}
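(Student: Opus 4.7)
The plan is to establish the cycle (1) $\Rightarrow$ (2) $\Rightarrow$ (3) $\Rightarrow$ (1). The equivalence (1) $\Leftrightarrow$ (2) is the non-reversible analogue of Theorem~\ref{Thm:fon}, handled by the same strategy on directed path graphs: for (1) $\Rightarrow$ (2) one encodes each $\approx_{k}^{t}$-class by an \fo(+1) formula that existentially quantifies chains of $k$ consecutive positions in order to count occurrences of each length-$\leq k$ factor up to threshold $t$, together with first-order formulas pinning down the length-$(k-1)$ prefix and suffix. Conversely, Hanf locality applied to directed path graphs shows that any \fo(+1) formula respects $\approx_{k}^{t}$ for some $k,t$ depending on its quantifier rank.

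For (2) $\Rightarrow$ (3), aperiodicity of $M(L)$ is immediate from Sch\"utzenberger-McNaughton-Papert via $\fo(+1)\subseteq\fo(<)$. For the identity, fix idempotents $e,f\in M(L)$ and arbitrary $x,y,z\in M(L)$, with representative words $u_e,u_f,u_x,u_y,u_z\in A^+$. Using (1), let $k,t$ witness that $L$ is a union of $\approx_{k}^{t}$-classes, and replace $u_e,u_f$ by powers $u_e^n,u_f^n$ for $n$ large (their images are unchanged by idempotency). Then in both words $u_e^n u_x u_f^n u_y u_e^n u_z u_f^n$ and $u_e^n u_z u_f^n u_y u_e^n u_x u_f^n$, each middle block $u_x,u_z$ sits between $u_e^n$ and $u_f^n$, and $u_y$ sits between $u_f^n$ and $u_e^n$. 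Hence the two words share length-$(k-1)$ prefix and suffix, equal counts for every length-$\leq k$ factor internal to any block, and identical multisets of boundary contexts between adjacent blocks, so they are $\approx_{k}^{t}$-equivalent. Their images in $M(L)$ therefore coincide, yielding $exfyezf = ezfyexf$.

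For (3) $\Rightarrow$ (1), let $h\colon A^+\to S$ be the syntactic morphism with $S$ finite, aperiodic and satisfying the identity. Choose $k$ larger than a Ramsey-type constant depending on $|S|$ and pick $t$ accordingly; it suffices to show that $w\approx_{k}^{t}w'$ implies $h(w)=h(w')$. The strategy is to decompose any sufficiently long word, via a Ramsey argument on products in $S$, as $s_0\cdot e\cdot h(v_1)\cdot e\cdot h(v_2)\cdots e\cdot h(v_m)\cdot e\cdot s_1$ for a common idempotent $e$ and blocks $v_i$ of length at most $k$. The identity, iterated with auxiliary idempotents $f$ obtained as powers of $e\,h(v_i)\,e$, together with aperiodicity, then allows one to reorder the blocks $v_i$ and absorb extra copies of $e$ freely, so that $h(w)$ depends only on $s_0$, $s_1$, $e$, and the multiset of the $h(v_i)$ up to threshold $t$ --- data which is determined by the $\approx_{k}^{t}$-class of $w$. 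Hence $L=h^{-1}(P)$ is \ltt.

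The main obstacle is this last implication, and specifically the combinatorial passage from the two-idempotent identity $exfyezf=ezfyexf$ to arbitrary reorderings of blocks within a common idempotent bracket. One has to iterate the identity carefully, manufacture auxiliary idempotents inside the Ramsey decomposition, and simultaneously keep track of factors that straddle block boundaries; this inductive bookkeeping, rather than any one algebraic step, is what makes the original proofs of Brzozowski-Simon and Beauquier-Pin substantial.
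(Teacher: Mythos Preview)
The paper does not prove this theorem at all: it is stated as a known result and attributed to Brzozowski--Simon \cite{BrozozoskiSimon} and Beauquier--Pin \cite{BeauquierPin}, with no proof or sketch given. There is therefore nothing in the paper to compare your proposal against.

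As a self-contained sketch of the classical result, your outline is broadly faithful to how the literature handles it, but one step in (2) $\Rightarrow$ (3) is stated too quickly. You show that the two pumped words are $\approx_k^t$-equivalent and then conclude ``their images in $M(L)$ therefore coincide''. Equality in the syntactic semigroup requires that $\alpha w_1\beta\in L \Leftrightarrow \alpha w_2\beta\in L$ for \emph{all} contexts $\alpha,\beta$, not merely $w_1\in L\Leftrightarrow w_2\in L$. The fix is easy --- $\approx_k^t$ is in fact a congruence on $A^*$, so $\alpha w_1\beta \approx_k^t \alpha w_2\beta$ for every context and the conclusion follows --- but you should say this explicitly, since the analogous reversible relation $\approxr t k$ used elsewhere in the paper is \emph{not} a congruence, and the contrast matters.

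For (3) $\Rightarrow$ (1) you correctly identify the real work and give an honest high-level plan (Ramsey factorisation, manufacturing auxiliary idempotents, iterating the identity to permute and absorb blocks). That is indeed the shape of the Beauquier--Pin argument; just be aware that turning this into an actual proof is where essentially all the difficulty lies, and your paragraph is a description of the obstacles rather than a proof.
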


Because of Proposition \ref{Prop:fon} we need to add more identities to
characterise the logic $\fo(\N)$ in terms of $\star$-semigroups.

\begin{theorem}
  The syntactic $\star$-semigroup 
  of an $\fo(\N)$-definable language
  satisfies the identity
  $$e{x} e^\star = e {x^\star} e^\star,$$
  where $e$ is an idempotent, and $x$ is any element of the semigroup.
\end{theorem}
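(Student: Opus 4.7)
The plan is to exploit the LRTT characterization (Theorem~\ref{Thm:fon}) together with the idempotency of $e$. Fix an $\fo(\N)$-definable language $L$, hence a union of $\approxr{t}{k}$-classes for some $k,t>0$. Let $e=[u]$ be idempotent in $M(L)$ and $x=[v]$ arbitrary. To establish $exe^\star = ex^\star e^\star$, i.e.\ $[uvu^r]=[uv^r u^r]$, it suffices to show that for all $p,q\in A^*$ the words $w=puvu^r q$ and $w'=puv^r u^r q$ fall in the same $\approxr{t}{k}$-class, since $L$ is a union of such classes.

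Using $[u^n]=[u]$ (from $e=e^2$), I first replace $u$ by a power $u^n$ with $n\cdot|u|\geq k-1$; this does not change the syntactic classes involved. In the inflated words $pu^n v (u^n)^r q$ and $pu^n v^r (u^n)^r q$, the suffix $s$ of $u^n$ of length $k-1$ and its reverse $s^r$ (the prefix of $(u^n)^r$) form a buffer cleanly isolating the middle block from $p$ and $q$. The length condition is trivial, and the prefixes and suffixes of length $k-1$ of the two words coincide, since they lie entirely in $pu^n$ and $(u^n)^r q$, so the prefix/suffix condition of $\approxr{t}{k}$ is satisfied.

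For the factor-count condition, fix $z$ with $|z|\leq k$ and split the occurrences of $z$ into those that touch the middle block ($v$ or $v^r$) and those that do not. The non-touching occurrences lie entirely inside $pu^n$ or $(u^n)^r q$ and contribute identically to both words. The touching occurrences are confined to the windows $W=svs^r$ and $W'=sv^r s^r$, respectively, and $W'=W^r$. Consequently, reversal induces a bijection between occurrences of $z$ in $W$ and of $z^r$ in $W'$, yielding $\sharp^r(W,z)=\sharp^r(W',z)$; subtracting the identical contributions from occurrences lying wholly inside $s$ or $s^r$ gives $\sharp^r(w,z)=\sharp^r(w',z)$.

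The main conceptual obstacle is recognizing that $\sharp^r$ (not $\sharp$) is the right invariant: individual factor counts are \emph{not} preserved when $v$ is swapped with $v^r$, but summed with the counts for $z^r$ they are, precisely because $W$ and $W'$ are mutual reverses. The accompanying technical point is to choose the inflation $u\mapsto u^n$ large enough that the windows $W,W'$ are disjoint from $p$ and $q$, for which the idempotency of $e$ is exactly what is needed. Once these are in place, the LRTT equivalence holds, giving $puvu^r q\in L$ if and only if $puv^r u^r q\in L$ for all $p,q$, and hence the identity.
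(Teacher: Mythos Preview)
Your proof is correct and follows essentially the same approach as the paper: both use the \lrtt characterisation, inflate $u$ to a power using idempotency to create a buffer of length at least $k-1$, and then verify that for all contexts the two words are $\approxr{t}{k}$-equivalent. The paper's presentation is marginally slicker in one respect: rather than the explicit window decomposition, it sets $w=u^{k}s(u^{k})^{r}$ and observes that the other word is literally $w^{r}$, so the internal $\sharp^{r}$-counts agree trivially (since $\sharp^{r}(w,z)=\sharp^{r}(w^{r},z)$ always), and only the boundary occurrences overlapping the contexts need to be matched---which your buffer argument handles identically.
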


\begin{proof}
  Assume we are given an $\fo(\N)$-language $L$, with its syntactic
  $\star$-semigroup $M=\(A^+\!/\!\sim_L, \cdot, \star\)$, and $h\colon A^+
  \rightarrow M$ the canonical morphism recognising $L$.  Let $e$ be an
  idempotent of $M$, and let $x$ be an element of $M$.  Pick nonempty words $u$
  and $s$ such that $h(u)=e$ and $h(s)=x$.

  By definition of the involution, $h(u^r)=e^\star$ and $h(s^r)=x^\star$.  We
  are going to show that 
  $usu^r \sim_L us^ru^r$ and hence they will correspond to the same
  element in the syntactic $\star$-semigroup, proving that $ex e^\star = e
  x^\star e^\star$.

  Since $L$ is $\fo(\N)$ definable, we know by Theorem \ref{Thm:fon} that $L$ is
  a union of $\approxr t k$ equivalence classes for some $k,t>0$.
  Consider the words $w=(u^k)s(u^k)^r$ and $w^r=(u^k)s^r(u^k)^r$, obtained by
  pumping the words corresponding to $e$ and $e^\star$.  Since
  $e, e^\star$ are idempotents, it is clear that $h(w)=h(usu^r)=exe^{\star}$ and
  $h(w^r)=h(us^ru^r)=ex^{\star}e^{\star}$. 
  
  For all contexts $\alpha,\beta \in A^*$, we show below that $\alpha w \beta
  \approxr t k \alpha w^r \beta$, which implies $\alpha w \beta \in L$ iff $\alpha
  w^r \beta \in L$ since $L$ is a union of $\approxr t k$ classes.  It follows
  that $w\sim_L w^r$ and therefore $h(w)=h(w^r)$, which will conclude the proof.
  
  Fix some contexts $\alpha,\beta\in A^{*}$.  Since $u\neq\varepsilon$, the
  words $\alpha w \beta$ and $\alpha w^r \beta$ have the same prefix of length
  $k-1$ and the same suffix of length $k-1$.  Now, consider $v\in A^{k}$.  If an
  occurrence of $v$ (resp.\ $v^{r}$) in $\alpha w \beta$ overlaps with $\alpha$
  or $\beta$ then we have the very same occurrence in $\alpha w^{r} \beta$.
  Using $w \approxr t k w^{r}$, we deduce that $\sharp^{r}(\alpha w
  \beta,v)=^{t} \sharp^{r}(\alpha w^r \beta,v)$.  Therefore, $\alpha w \beta
  \approxr t k \alpha w^r \beta$.
\end{proof}
The converse direction is open.  The similar direction in the case of $\fo(+1)$
goes via categories \cite{Tilson} and uses the Delay theorem of Straubing
\cite{StraubingVD,StraubingBook}.

\section{Conclusion} \label{Section:conclusion}

The logics $\mso(\bet), \mso(\N)$ and $\fo(\bet)$ behave analogously to the
classical counterparts $\mso(<), \mso(+1)$ and $\fo(<)$.  But the logic
$\fo(\N)$ gives rise to a new class of languages, locally-reversible threshold
testable languages.  The quest for characterising the new class takes us to the
formalism of involution semigroups.  The full characterisation of the new class
is the main question we leave open.  Another line of investigation is to study
the equationally-defined classes that arise naturally from automata theory.


\end{document}